\documentclass[conference]{IEEEtran}
\IEEEoverridecommandlockouts                              %

\usepackage{cite}
\usepackage{amsmath,amssymb,amsfonts,setspace}
\usepackage{algorithmic}
\usepackage{graphicx}
\usepackage{textcomp}
\usepackage{xcolor}

\newcommand{\reals}{\mathbb{R}}

\newcommand{\nullset}{\varnothing}

\newcommand{\norm}[1]{\left\lVert#1\right\rVert}

\DeclareMathOperator*{\argmax}{arg\,max}

\newtheorem{assume}{Assumption}
\newtheorem{rem}{Remark}

\newtheorem{lemma}{Lemma}
\newtheorem{pf}{Proof}

\let\labelindent\relax 

\usepackage{tikz, graphicx, mathtools, amsfonts, mathrsfs, amsmath, xfrac, dsfont, enumitem,multirow }

\usepackage[ruled,vlined]{algorithm2e}
\SetKwInOut{Parameters}{Parameters}
\SetKwInOut{Initialization}{Initialization}
\usepackage{orcidlink}
\newcommand{\nerf}{NeRF }
\newcommand{\nerfs}{NeRFs }

\begin{document}

\title{\LARGE \bf
Safe Navigation using Neural Radiance Fields via Reachable Sets
}

\author{\IEEEauthorblockN{Omanshu Thapliyal \orcidlink{0000-0003-3847-188X}}
\IEEEauthorblockA{\textit{Researcher, Hitachi America Ltd.}\\
Santa Clara, USA }
\and
\IEEEauthorblockN{Malarvizhi Sankaranarayanasamy}
\IEEEauthorblockA{\textit{Sr. Researcher, Hitachi America Ltd.}\\
Santa Clara, USA 
\\
\texttt{Malar.Samy@hal.hitachi.com}}
\and
\IEEEauthorblockN{Ravigopal Vennelakanti}
\IEEEauthorblockA{\textit{VP, Hitachi America Ltd.}\\
Santa Clara, USA 
\\
\texttt{Ravigopal.Vennelakanti}\\\texttt{@hal.hitachi.com}}}

\maketitle

\vspace{-1cm}
\begin{abstract}

Safe navigation in cluttered environments is an important challenge for autonomous systems.
Robots navigating through obstacle ridden scenarios need to be able to navigate safely in the presence of obstacles, goals, and ego objects of varying geometries.
In this work, reachable set representations of the robot's real-time capabilities in the state space can be utilized to capture safe navigation requirements.
While neural radiance fields (NeRFs) are utilized to compute, store, and manipulate the volumetric representations of the obstacles, or ego vehicle, as needed.
Constrained optimal control is employed to represent the resulting path planning problem, involving linear matrix inequality constraints.
We present simulation results for path planning in the presence of numerous obstacles in two different scenarios.
Safe navigation is demonstrated through using reachable sets in the corresponding constrained optimal control problems.
\end{abstract}
\begin{IEEEkeywords}
Mobile robot, path planning, neural radiance field, reachable sets.
\end{IEEEkeywords}

\section{Introduction}

Neural radiance fields (NeRFs) have emerged as a useful technique for complex visual modeling and computer graphics.
In the eponymous work in \cite{mildenhall2021nerf}, \nerfs were found to outperform existing convolutional neural network-based methods in discretized voxel representation of 3D objects.
The volumetric representational capabilities of \nerfs have been studied in depth for high-fidelity scene rendering from limited training data comprising of multi-pose images and depth data taken from simple camera sensors.
Due to their training via raymarching, \nerfs can be provided with sparse inputs, and generate realistic volumetric representations with smooth and continuous contours of the given object.

Since \cite{mildenhall2021nerf}, \nerfs have been applied to various engineering problems \cite{yu2021plenoctrees, gao2022nerf, li20213d}, including scene reconstruction, view synthesis, object tracking \& pose estimation, biomedical imaging, and autonomous systems.
Particularly for us, autonomous robot applications of \nerfs include path planning in visual renderings of scenarios \cite{adamkiewicz2022vision}.
Authors in \cite{kwon2023renderable} suggest similar rendering-based path planning methods where the entire 3D scene is represented via detailed pre-trained \nerf renderings of the complete scenario.
The autonomous robot then attempts to navigate through the \nerf scene.
The resulting volumetric representations studied in literature can allow for spatial reasoning towards safe locomotion of autonomous robots, which is the scope of this work.

Similarly, state space approaches to safety characterization via \textit{reachability} properties have been widely studied in control systems literature \cite{bansal2017hamilton,chen2018hamilton,chen2017reachability, thapliyal2022approximate, thapliyal2023approximating, thapliyal2024embedding}.
State space reachability of a dynamical system involves computing all possible states a dynamical system can take under a given control input set, within a given time horizon.
As a result, (forward) reachability is often utilized to predict future safe evolutions of trajectory sets, starting from given initial sets/conditions.
``Lookahead"-safety via purely data-driven methods utilizing reachability combined with neural networks are considered in \cite{thapliyal2022approximate} (in the form of deep neural networks), and \cite{thapliyal2024embedding} (in the form of physics-informed neural networks).
Similar methods for machine learning approaches to learn reachable sets to determine safety properties of dynamical systems are discussed in \cite{bansal2021deepreach}.
Safe autonomy has been addressed in numerous texts and methods -- reachable sets being an important volumetric representation of robot's real-time capabilities in the state space.
Therefore, reachability based control and verification finds applications in aerospace engineering, automotive control, robotics, and cyberphysical systems \cite{bansal2017hamilton, chen2018hamilton, michaux2023reachability}.

Motivated by the utility of reachability in expressing state-space safety constraints and system properties, and by the representational power of \nerfs for complex 3D geometries, this work explores the following.
Safe navigation of autonomous systems with low-power sensors can be enhanced by using \nerf-based volumetric models of robots and obstacles.
Reachability-based safety constraints then guide motion in cluttered environments.
Since convex hulls of \nerf geometries and reachable sets yield convex constraints, the resulting safe path-planning problem can be solved efficiently.
The integration of \nerf representations with reachable sets and the corresponding path-planning formulation constitute the main contributions of this work.

The rest of this paper is organized as follows.
Section \ref{sec:prob} outlines the problem formulation for safe navigation using \nerfs\ and their geometric representations for safe path planning via reachability. 
Section \ref{sec:method} presents the proposed solution. Section \ref{sec:example} demonstrates two path planning scenarios with \nerf-based obstacles and a robot vehicle. 
Finally, Section \ref{sec:conclusion} provides concluding remarks and future directions.

\section{Problem Formulation}\label{sec:prob}

Consider a mobile robot with linear dynamics given by:
\begin{equation}\label{eq:ltidynamics}
\dot{x}(t)=A(t)x(t)+B(t)u(t)\,\end{equation}
where the robot state at time $t\geq 0$ is given by $x\in\reals^n$, control input at time $t$ is given by $u\in\mathcal U\subset\reals^m$, and the time-varying system matrices are given by $A\in\reals^{n\times n}$, and $B\in\reals^{n\times m}$, respectively.
Further, let the initial state lie in some set $x(0)\in\mathcal X_0$. 
The path planning problem for the mobile robot includes a scenario description, defined as follows.
Let $\Omega\subset\reals^n$ be the path planning arena, a bounded region of the state space.
Let $\mathcal O=\{O_1,O_2,\cdots,O_{N}\}$ where $O_i\subset \Omega$ be the static obstacles populating the scenario, and $\mathcal G\subset\Omega$ be the goal state set.
Finally, let $\mathcal S\subset \Omega$ be the robot geometric representation, denoting the physical boundaries of the mobile robot.
Therefore, the path planning problem for some planning horizon $T$ is:
\begin{equation}\label{eq:pathplanprob}
\begin{split}
\text{find }\; & u(t) \\
\text{ s.t. }\; & x(\tau) \in\mathcal G, \tau\leq T, \\
& \mathcal S(\tau) \cap \mathcal O_i = \nullset \;\forall i \in \{1,\cdots,N\}, \forall \tau\leq T \\
\text{given } & \Gamma=\{\{(A(t),B(t))\}_{t\leq T}, \Omega,\mathcal O,\mathcal G,\mathcal S, \mathcal X_0\}
\end{split}
\end{equation}

Clearly, a safe navigation policy, $u^*(t)$ that solves (\ref{eq:pathplanprob}) models a safe mobile robot trajectory, if such a control sequence exists.
In the next section, we outline a reachability-based anticipating control methodology for the above problem.

\begin{figure*}[!htp]
    \centering
    \resizebox{0.65\textwidth}{!}
    {
  
\tikzset {_6tgfyomj9/.code = {\pgfsetadditionalshadetransform{ \pgftransformshift{\pgfpoint{89.1 bp } { -108.9 bp }  }  \pgftransformscale{1.32 }  }}}
\pgfdeclareradialshading{_8z8rqnw52}{\pgfpoint{-72bp}{88bp}}{rgb(0bp)=(1,1,1);
rgb(0bp)=(1,1,1);
rgb(25bp)=(0,0,0);
rgb(400bp)=(0,0,0)}

  
\tikzset {_jfzmwe045/.code = {\pgfsetadditionalshadetransform{ \pgftransformshift{\pgfpoint{89.1 bp } { -128.7 bp }  }  \pgftransformscale{1.32 }  }}}
\pgfdeclareradialshading{_2cswvzbq1}{\pgfpoint{-72bp}{104bp}}{rgb(0bp)=(1,1,1);
rgb(0bp)=(1,1,1);
rgb(19.196428571428573bp)=(0.48,0.15,0.15);
rgb(400bp)=(0.48,0.15,0.15)}

  
\tikzset {_2vh0k6jjw/.code = {\pgfsetadditionalshadetransform{ \pgftransformshift{\pgfpoint{89.1 bp } { -108.9 bp }  }  \pgftransformscale{1.32 }  }}}
\pgfdeclareradialshading{_h6p5guust}{\pgfpoint{-72bp}{88bp}}{rgb(0bp)=(1,1,1);
rgb(0bp)=(1,1,1);
rgb(25bp)=(0,0,0);
rgb(400bp)=(0,0,0)}
\tikzset{every picture/.style={line width=0.75pt}} 

\begin{tikzpicture}[x=0.75pt,y=0.75pt,yscale=-1,xscale=1]

\draw  [draw opacity=0][shading=_8z8rqnw52,_6tgfyomj9] (85.33,267.6) .. controls (85.33,249.37) and (106.6,234.6) .. (132.83,234.6) .. controls (159.07,234.6) and (180.33,249.37) .. (180.33,267.6) .. controls (180.33,285.83) and (159.07,300.6) .. (132.83,300.6) .. controls (106.6,300.6) and (85.33,285.83) .. (85.33,267.6) -- cycle ;
\draw  [draw opacity=0][shading=_2cswvzbq1,_jfzmwe045] (242.67,140.27) .. controls (242.67,113.94) and (273.48,92.6) .. (311.5,92.6) .. controls (349.52,92.6) and (380.33,113.94) .. (380.33,140.27) .. controls (380.33,166.59) and (349.52,187.93) .. (311.5,187.93) .. controls (273.48,187.93) and (242.67,166.59) .. (242.67,140.27) -- cycle ;
\draw  [fill={rgb, 255:red, 155; green, 155; blue, 155 }  ,fill opacity=0.39 ][dash pattern={on 4.5pt off 4.5pt}] (126.71,182.38) -- (175.37,214.35) -- (71.98,326.3) -- (23.32,294.33) -- cycle ;
\draw    (97.67,257.27) -- (132.83,267.6) ;
\draw [shift={(97.67,257.27)}, rotate = 16.37] [color={rgb, 255:red, 0; green, 0; blue, 0 }  ][fill={rgb, 255:red, 0; green, 0; blue, 0 }  ][line width=0.75]      (0, 0) circle [x radius= 3.35, y radius= 3.35]   ;
\draw  [fill={rgb, 255:red, 248; green, 231; blue, 28 }  ,fill opacity=0.2 ][dash pattern={on 4.5pt off 4.5pt}] (236.66,49.32) -- (306.09,65.74) -- (245.42,189.36) -- (176,172.93) -- cycle ;
\draw    (257,123.93) -- (311.5,140.27) ;
\draw [shift={(257,123.93)}, rotate = 16.68] [color={rgb, 255:red, 0; green, 0; blue, 0 }  ][fill={rgb, 255:red, 0; green, 0; blue, 0 }  ][line width=0.75]      (0, 0) circle [x radius= 3.35, y radius= 3.35]   ;
\draw [line width=1.5]  [dash pattern={on 5.63pt off 4.5pt}]  (99.35,254.34) .. controls (213,231.93) and (213.67,153.27) .. (253.67,123.27) ;
\draw [line width=1.5]  [dash pattern={on 1.69pt off 2.76pt}]  (132.83,267.6) .. controls (311.67,242.6) and (226.33,198.6) .. (307.67,140.6) ;
\draw  [draw opacity=0][shading=_h6p5guust,_2vh0k6jjw] (477.7,221.25) .. controls (471.45,198.41) and (495.14,172.03) .. (530.62,162.32) .. controls (566.09,152.61) and (599.92,163.25) .. (606.17,186.08) .. controls (612.42,208.91) and (588.73,235.29) .. (553.25,245) .. controls (517.77,254.71) and (483.95,244.08) .. (477.7,221.25) -- cycle ;
\draw   (557.39,173.78) -- (501.83,202.59) -- (478.67,191.58) -- (519.92,155.98) -- (568.57,144.98) -- cycle ;
\draw   (501.83,202.59) -- (557.4,173.78) -- (562.95,215.61) -- (534.47,249.08) -- (491.8,247.07) -- cycle ;
\draw   (534.47,249.08) -- (562.95,215.61) -- (579.37,206.61) -- (575,246.07) -- (552.6,264.17) -- cycle ;
\draw   (475.8,229.98) -- (475.8,229.98) -- (478.66,191.58) -- (501.83,202.59) -- (491.8,247.07) -- cycle ;
\draw   (520.6,270.2) -- (491.8,247.07) -- (534.47,249.08) -- (534.47,249.08) -- (552.6,264.17) -- cycle ;
\draw   (593.13,167.63) -- (568.58,144.98) -- (593.13,167.63) -- (593.13,167.63) -- (579.37,206.61) -- cycle ;
\draw   (619.8,192.77) -- (593.13,167.63) -- (579.37,206.61) -- (575,246.07) -- (619.8,192.77) -- cycle ;

\draw  [line width=2.25]  (446.4,95.8) -- (634.2,95.8) -- (634.2,341.4) -- (446.4,341.4) -- cycle ;

\draw (132.83,300.6) node [anchor=north west][inner sep=0.75pt]  [font=\large]   {$\mathcal{X} _{0}$};
\draw (360.67,14.93) node [anchor=north west][inner sep=0.75pt]  [font=\large]   {$\mathcal{R}^{z }( \tau ;\mathcal{X}_{0})$};
\draw (66,243.6) node [anchor=north west][inner sep=0.75pt]  [font=\large]   {$z _{0j}^{*}$};
\draw (48.67,166.93) node [anchor=north west][inner sep=0.75pt]  [font=\large]   {$\left( c_{0j} ,\gamma _{0j}^{*}\right)$};
\draw (30,129.6) node [anchor=north west][inner sep=0.75pt]  [font=\large]   {$j^{th} \ \mathsf{Hyperplane}$};
\draw (94.67,327.33) node [anchor=north west][inner sep=0.75pt]  [font=\large]   {$\mathsf{Initial\ Set}$};
\draw (360.67,61.6) node [anchor=north west][inner sep=0.75pt]  [font=\large]   {$ \begin{array}{l}
\mathsf{Reachable\ Set}\\
\mathsf{at\ the\ \tau }
\end{array}$};
\draw (204.67,106.27) node [anchor=north west][inner sep=0.75pt]  [font=\large]   {$z _{j}^{*}( \tau)$};
\draw (244.33,240.93) node [anchor=north west][inner sep=0.75pt]  [font=\large]   {$\dot{z }( t) ={A}(t) z ( t) \ +B(t) u( t)$};
\draw (473.87,267.33) node [anchor=north west][inner sep=0.75pt]  [font=\large]   {$\bigcap _{j=1}^{n_{p}}\left\{z \in \left( c_{0j} ,\gamma _{0j}^{*}\right)\right\}$};
\draw (452.27,101.73) node [anchor=north west][inner sep=0.75pt]  [font=\large]   {$\mathsf{Polytopic\ Approximation}$};

\end{tikzpicture}
    }
    \caption{Polytopic Approximation of Reachable Sets: Evolution of a selected Hyperplane under dynamics $(A(t),B(t))$}
    \label{fig:polytope}
\end{figure*}
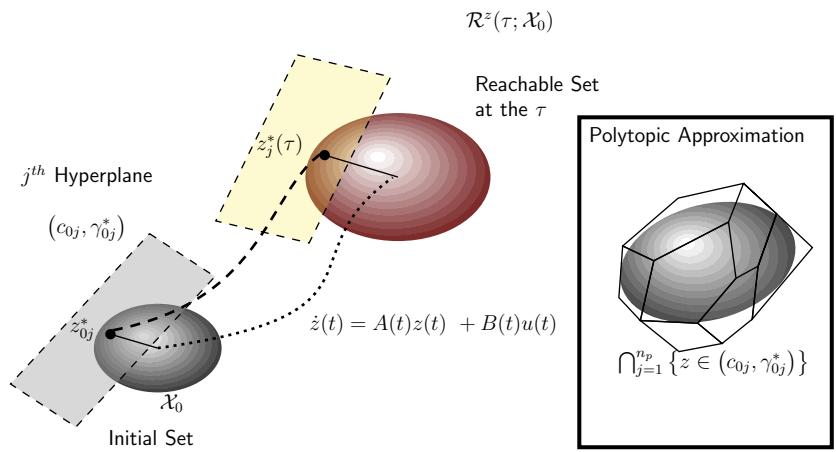

\section{Safe Navigation Methodology}\label{sec:method}
For the path planning problem in (\ref{eq:pathplanprob}), the target pursuing and obstacle avoiding constraints pose biggest challenges to numerical solutions of the associated optimal control problem.
As noted in our previous work \cite{thapliyal2021path}, state-space based constraints due to obstacle avoidance goals of the type $S(\tau)\cap \mathcal O_i=\nullset$ in (\ref{eq:pathplanprob}) result in non-convex feasible regions.
A natural consequence of this non-convexity and improper characterization of the feasible region of the state space where the robot can locomote is difficulty in numerical optimization.
To alleviate this problem, a shorter ``lookahead"-based path planning can be performed to replace the set intersection-based constraints in (\ref{eq:pathplanprob}) by employing reachable sets of the mobile robot with given dynamics.

A \textit{$T$-time reachable set} of the dynamical system $(A(t),B(t))$ in (\ref{eq:ltidynamics}), with admissible control input space $\mathcal U$, and initial condition set $\mathcal X_0$ is simply the set of all possible state space points that can be reached by the system.
In other words, such a set is defined as:
\begin{equation}\label{eq:reachdefn}
\begin{split}
\mathcal R (T;\mathcal X_0,\mathcal U) &\triangleq \left\{z(T)\in\reals^n:\dot{z}=A(s)z(s)+B(s)u(s),\right.\\
& \left. s\leq T, u\in\mathcal U, z(0)\in\mathcal X_0\right\}    
\end{split}
\end{equation}
To compute the reachable set in (\ref{eq:reachdefn}), we can invoke a method based on optimal control adapted from \cite{thapliyal2022approximate} as follows.
To this end, we begin with an assumption on the initial set $\mathcal X_0$ and the admissible control set $\mathcal U$.
\begin{assume}
Sets $\mathcal S, \mathcal X_0$ and $\mathcal O_i$ for $i=1,\cdots,N$ can be bounded by closed polytopes.
\end{assume}
Note that the assumption above is not restrictive as most path planning problems deal with bounded scenario sets $\Omega$.
An example of the geometric representation of the bounded state space set $\mathcal X_0$, and the resulting propagation of the set $\mathcal X_0$ in time $\tau$ under dynamics $(A(t),B(t))$ is shown in Fig.~\ref{fig:polytope}.

From Fig.~\ref{fig:polytope}, suppose the $n_p$-sided polytope bounding the initial set is given by an intersection of $n_h$ number of half planes as:
\begin{equation}\label{eq:half-plane}
\mathcal X_0 = \{z\in\reals^n:\langle c_{0j}, z\rangle \leq \gamma^*_{0j} \}    
\end{equation}
where the contact point of the hyperplane parameterized by $(c_{0j},\gamma^*_{0j})$ with the set $\mathcal X_0$ is the point $z^*_{0j}\in\reals^n$ (for $j=1,\cdots,n_p$).
Such a singular hyperplane is depicted in gray in Fig.~\ref{fig:polytope}.
The polytope resulting from the intersection of all the half-planes is shown in  Fig.~\ref{fig:polytope}, in the inset to the right.

With the polytopic definition of $\mathcal X_0$ under Assumption 1, and bounded admissible control set $\mathcal U$, the resulting reachable set $\mathcal R(T)$ defined in (\ref{eq:reachdefn}) is known to also be polytopic \cite{varaiya2000reach} under linear dynamics.
Furthermore, the contact points $z^*_{0j}$`s can be propagated in time to estimate the $T$ seconds in the future.
The propagated points go under some optimal control law $u^*(\cdot)$ that attempts to push the point $z^*{0j}$ furthest away from the hyperplane, thereby resulting in an extreme-point optimal control problem.
The geometric intuition thus described is also shown in Fig.~\ref{fig:polytope}, where the gray hyperplane (and its point of contact $z_{0j}$) are propagated at some time $t$ to the yellow hyperplane (and its point of contact $z^*_j(t)$).
This allows us to compute (and propagate) the polytopic reachable sets in real-time for linear systems.
This well known result has been noted in our previous works \cite{thapliyal2022approximate,thapliyal2023approximating}, and was initially known due to \cite{varaiya2000reach}.
Here we restate the polytopic reachable set computation method, as required by the problem posed in (\ref{eq:reachdefn}).

\begin{lemma}
Suppose hyperplane $(c_{0j},\gamma^*_{0j})$ supports the set $\mathcal X_0$ at the point $z_{0j}^*$, then the point evolves in time as:
\begin{align}\label{eq:support-point-evolve}
\dot{z}_j(\tau) &= A(t) z_j(\tau) + B(t) u_j^*(\tau)
\end{align}
where $u_j^*(\tau)$ solves the following optimal control problem:
\begin{align}
u_j^*(\tau) &= \underset{u\in\mathcal U}{\argmax} {\left\{ \langle c_j^*(\tau), A(\tau) z_j^*(\tau) + B(\tau) u(\tau)\rangle \right\}} \label{eq:optimal-control} \\    
\dot{c}_j^*(\tau) &= -A(\tau)^T{c}_j^*(\tau)\text{ s.t. }, {c}_j^*(t_0) = c_j(t_0), \,t_0\leq\tau\leq t \label{eq:costate-evolve}
\end{align}
for the $j^\text{th}$ hyperplane, and the co-state variable $\lambda_j^*(\tau)$.
\end{lemma}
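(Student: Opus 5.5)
The approach is to read the lemma as a support-function statement. For a linear system, the reachable set is the convex set whose support function in direction $c$ at time $t$ is $\sup_{z\in\mathcal R(t)}\langle c,z\rangle$. Once we show that the maximizer in a direction $c_j^*(t)$ is the trajectory driven from $z^*_{0j}$ by the control in (\ref{eq:optimal-control}), the hyperplane $(c_j^*(t),\gamma_j^*(t))$ with $\gamma_j^*(t)=\langle c_j^*(t),z_j^*(t)\rangle$ supports $\mathcal R(t;\mathcal X_0,\mathcal U)$ at $z_j^*(t)$. This is exactly the content of (\ref{eq:support-point-evolve})--(\ref{eq:costate-evolve}).

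\textbf{Step 1.} Let $\Phi(t,s)$ be the state transition matrix of $A(\cdot)$. Write every element of $\mathcal R(t)$ in (\ref{eq:reachdefn}) as
\[
z(t)=\Phi(t,t_0)z_0+\int_{t_0}^t\Phi(t,s)B(s)u(s)\,ds .
\]

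\textbf{Step 2.} Define the costate by (\ref{eq:costate-evolve}), so that $c_j^*(\tau)=\Phi(t_0,\tau)^Tc_{0j}$. The key identity is that $\langle c_j^*(\tau),z(\tau)\rangle$ has derivative $\langle c_j^*(\tau),B(\tau)u(\tau)\rangle$ along any admissible trajectory, because the $A$-terms cancel: $-\langle A^Tc,z\rangle+\langle c,Az\rangle=0$. Integrating from $t_0$ to $t$ gives
\[
\langle c_j^*(t),z(t)\rangle=\langle c_{0j},z_0\rangle+\int_{t_0}^t\langle c_j^*(s),B(s)u(s)\rangle\,ds .
\]

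\textbf{Step 3.} The right-hand side separates into a term depending only on $z_0$ and a term depending only on $u$. It is therefore maximized termwise.
\begin{itemize}
\item The first term is maximized over $\mathcal X_0$ at the contact point $z_{0j}^*$, with value $\gamma^*_{0j}$, since the hyperplane supports $\mathcal X_0$ there.
\item The second term is maximized pointwise in $s$ by choosing $u(s)\in\argmax_{u\in\mathcal U}\langle c_j^*(s),B(s)u\rangle$.
\end{itemize}
Adding the control-independent term $\langle c_j^*,A z_j^*\rangle$ does not change the argmax. This recovers (\ref{eq:optimal-control}) as stated, which is a Pontryagin maximum principle form with Hamiltonian $\langle\lambda,Az+Bu\rangle$ and $\lambda_j^*=c_j^*$. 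The resulting trajectory from $z_{0j}^*$ under $u_j^*$ then satisfies (\ref{eq:support-point-evolve}). Every other reachable point $z(t)$ gives $\langle c_j^*(t),z(t)\rangle\le\langle c_j^*(t),z_j^*(t)\rangle$, so the propagated hyperplane supports $\mathcal R(t)$ at $z_j^*(t)$.

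\textbf{Main obstacle.} The measurability and existence of the pointwise maximizer $u_j^*(\cdot)$ is the main technical point. I would handle it by assuming $\mathcal U$ is compact and convex, as is implicit in "bounded admissible control set". Compactness gives existence of the argmax. A measurable selection theorem, or piecewise continuity when $\mathcal U$ is a polytope, makes $u_j^*$ admissible. Non-uniqueness of the argmax only affects which contact point is chosen, not the support value.

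Finally, intersecting the $n_p$ propagated half-spaces yields an outer polytopic approximation of $\mathcal R(t)$, in line with \cite{varaiya2000reach}. That part is not needed for the lemma itself.
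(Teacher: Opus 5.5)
Your proof is correct, and it follows the route the paper relies on. The paper's proof only cites Varaiya's reach-set result (and Theorem 1 of \cite{thapliyal2023approximating}). The content of that result is what you wrote out: the adjoint identity $\frac{d}{d\tau}\langle c_j^*(\tau),z(\tau)\rangle=\langle c_j^*(\tau),B(\tau)u(\tau)\rangle$, then maximizing termwise over $\mathcal X_0$ and pointwise over $\mathcal U$. You also state the measurable-selection caveat explicitly.
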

\begin{pf}
The proof follows from Varaiya's polytopic reachability results in \cite{varaiya2000reach} and Theorem 1 in \cite{thapliyal2023approximating}.    
\end{pf}

Note that the results from Lemma 1 allow a closed form computation of reachable set $\mathcal R(T;\mathcal X_0,\mathcal U)$ over continuous time.
The computational complexity grows with the number of hyperplanes $n_p$ required to approximate the reachable set $\mathcal R(T)$.
While safe obstacle avoidance is guaranteed for the next $T$ seconds for the point vehicle, if $\mathcal R(T)\cap \mathcal O_i=\nullset$ for each obstacle $i$, such a guarantee holds point-wise along the trajectory.
The polytopic representation of $\mathcal R(T)$ allows us to simplify the path planning constraints in the original problem $(\ref{eq:pathplanprob})$, upon combining with the ego robot geometry.

To this end, we utilize neural radiance fields \nerf to represent the obstacles.
This is done, in practice, via mono cameras on-board the vehicle, or offline to store 3D geometries of the objects in the scenario $\Gamma$, as shown in Fig.~\ref{fig:nerf-description}.

\begin{figure}[!ht]
    \centering
    \includegraphics[width=0.85\columnwidth]{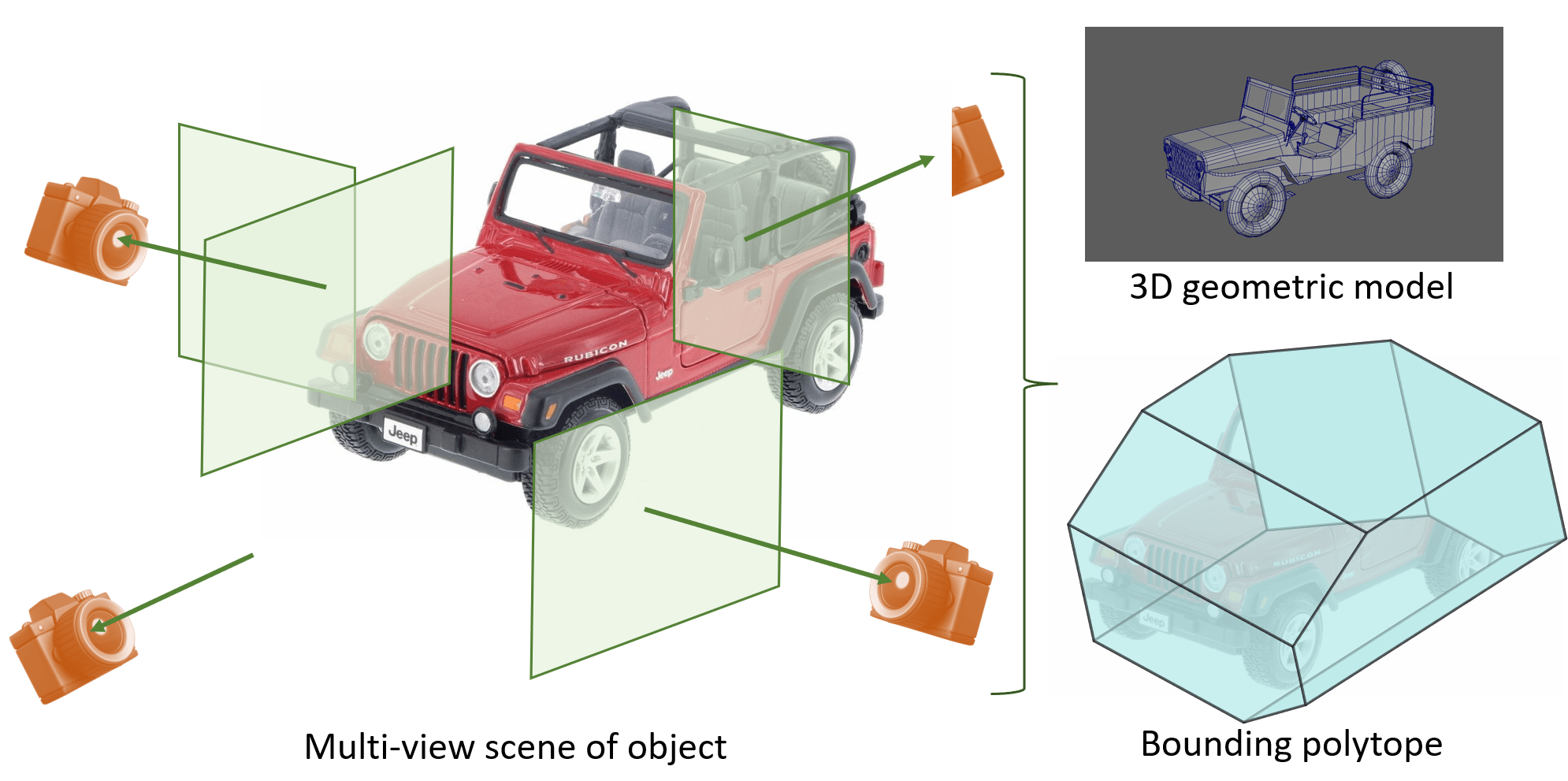}
    \caption{Reconstructing a 3D object using \nerfs}
    \label{fig:nerf-description}
\end{figure}
Using image data recorded over varying viewing directions as shown in Fig.~\ref{fig:nerf-description}, a \nerf learns a neural radiance function at the given viewing direction for the object.
In this case, the input data to the \nerf is a sequence of images across different camera angles $I_i\in\reals^3\times\reals^2$, where $I_i={x,y,z,\theta,\phi}$ represents the 3D point and viewing angle.
The neural radiance output function $F:\reals^3\times\reals^2\to\reals^4$ maps to RGB color and intensity. 

The core \nerf architecture is a feedforward neural network with inputs from $\reals^3\times\reals^2$. 
Training uses image sequences ${I_i}$ to minimize predicted versus observed radiance \cite{mildenhall2021nerf}. 
Assuming a continuous volumetric scene, \nerfs produce high-fidelity 3D renderings suitable for generating point clouds and corresponding convex hull polytopes.

No distinction is made between \nerf\ objects as obstacles, goals, or the ego vehicle. 
Using hull polytopes derived from their point clouds, we can now perform path planning for scenario $\Gamma$.
We first define the Minkowski sum of two sets $\mathbb S$ and $\mathbb T$ denoted by the operator $\oplus$, defined as $\mathbb S\oplus \mathbb T\triangleq \{s+t:s\in\mathbb S, t\in\mathbb T\}$.
\begin{figure}[!ht]
    \centering
    \includegraphics[width=0.7\columnwidth]{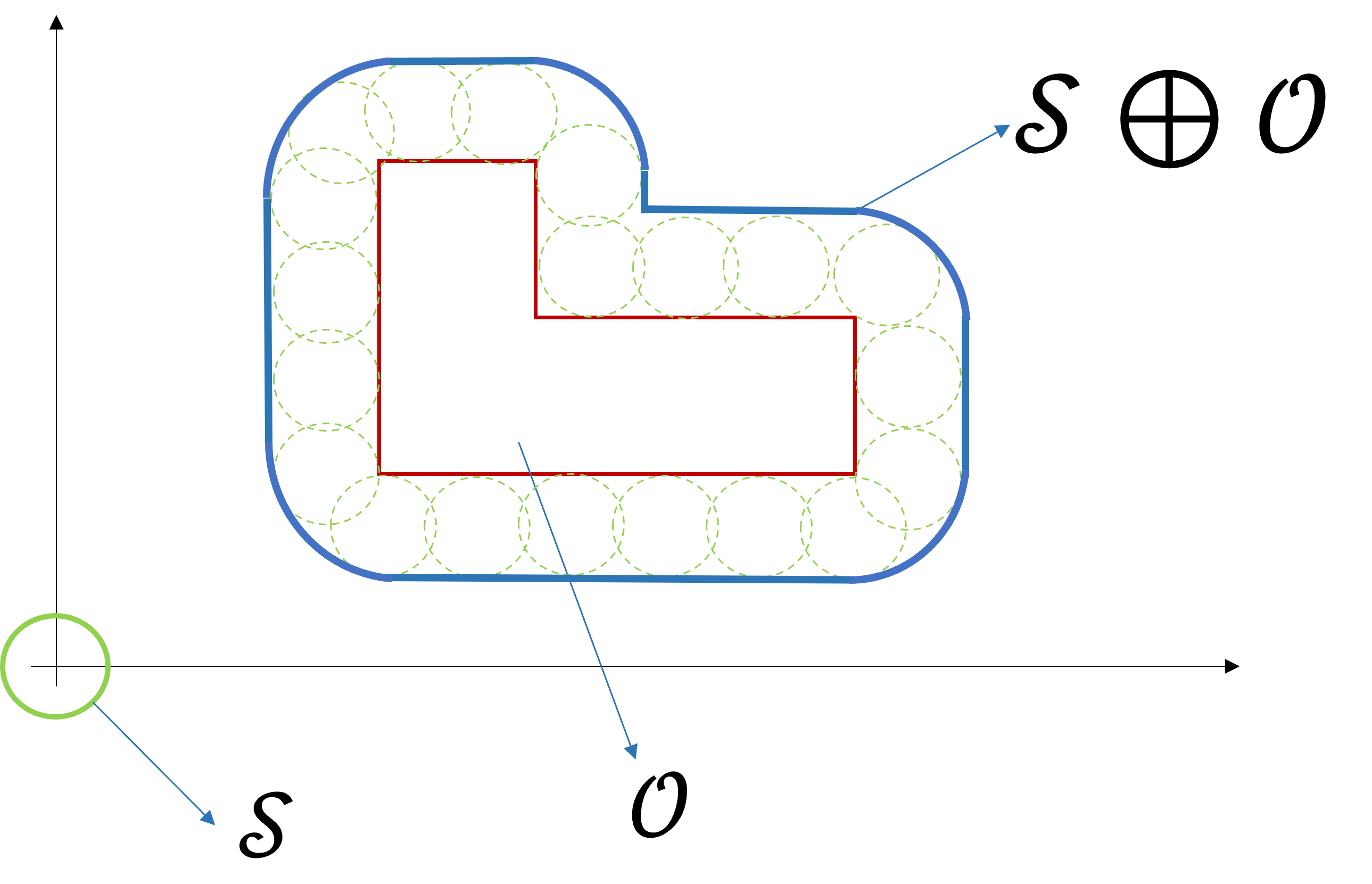}
    \vspace{-0.2cm}
    \caption{Minkowski sum of robot geometry with a polygonal obstacle}
    \vspace{-0.2cm}
    \label{fig:config-space}
\end{figure}
The collection of all possible configurations of the robot around the obstacle $\mathcal O$, called the configuration space ($\mathcal C$-space), is shown in Fig.~\ref{fig:config-space}.
Given the robot and obstacle shapes, the configuration space can be computed by their Minkowski sums.
Minkowski sum-based methods for path planning have been utilized in literature (see \cite{6284872,ziegler2010fast,aronov1994translational} for more details) at a great success.

Minkowski sum operation effectively expands the robot shape by the shape of each obstacle, taking into account all possible  configurations of the robot and all positions of the obstacle.
So far, no anticipatory information was injected to the configuration space, that is, all possible relative configurations of the obstacle-robot pair are taken at a given time instance, say $t$.
However, path planning for some $T$ seconds in the future requires a `sweeping operation' across time $t\leq T$ of the relative configurations.
To this end, we can utilize the polytopic representations of the robot object, obstacles, and the reachable set to get rid of the set-based constraints in (\ref{eq:pathplanprob}).

\begin{algorithm}[!t]
\setstretch{1.0}
\SetNoFillComment
\SetAlgoLined
\KwIn{$\Gamma=\{\{(A(t),B(t))\}_{t\leq T}, \Omega,\mathcal O,\mathcal G,\mathcal S, \mathcal X_0\}$}
\Parameters{\nerf description of ego robot, obstacles, goal set\\
\nerf object representation:\\
Polytope {$Fx\leq b$} for $\mathcal S$,\\
Polytope {$Gx\leq c$} for $\mathcal G$,\\
Polytope {$H_ix\leq d_i$} for $\mathcal O_i$\\}
Compute $\mathcal C$-space $\leftarrow$ \texttt{minkowskiSum($\alpha,\beta$)} \cite{aronov1994translational}\\
\While{$\tau\leq T$}
{
    compute $\mathcal R(\tau)\leftarrow$ Lemma 1\\
    model predictive step:\\
    min $d(\text{polytope}(G,c), \mathcal R(\tau))$\\
    subject to:\\
    $\mathcal R(\tau) \cap$ {polytope}($H_i,d_i$) $=\nullset$\\
    $\dot{x}(\tau)=A(t)x(\tau)+B(t)u(\tau)$
} 
\caption{Safe path planning via \nerfs and reachable sets}\label{algo:algo1}    
\end{algorithm}

\begin{rem}
The swept region configuration space can be visualized along with the $T$-time reachable tube as an inflated obstacle region for the relative robot-obstacle geometric configurations.
This allows for including dynamical models implicitly into the polytopic regions used for path planning
\end{rem}
\begin{rem}
The computations in Algorithm~\ref{algo:algo1} are carried out for polytopes, as they are known to be efficiently Minkowski summable in literature \cite{ziegler2010fast,lien2008hybrid}.
\end{rem}

\section{Numerical Simulation}\label{sec:example}
Consider a mobile robot navigating through 3D state space in the presence of numerous static obstacles.
We consider double integrator dynamics for the mobile robot, given by:
\begin{equation}\label{eq:integrator}
m\ddot{x}(t)=F_x,\,m\ddot{y}(t)=F_x,\,m\ddot{z}(t)=F_z  
\end{equation}
This can be readily represented as an LTI system with the state $[x,\dot{x},y,\dot{y},z,\dot{z}]^T]$, with control input $[F_x/m, F_y/m, F_z/m]^T$.
The robot geometry is defined by a sequence of camera views along varying $(\theta,\phi)$, as shown in Fig.~\ref{fig:cameraviews}.
\begin{figure}[!ht]
    \centering
    \includegraphics[width=0.75\columnwidth]{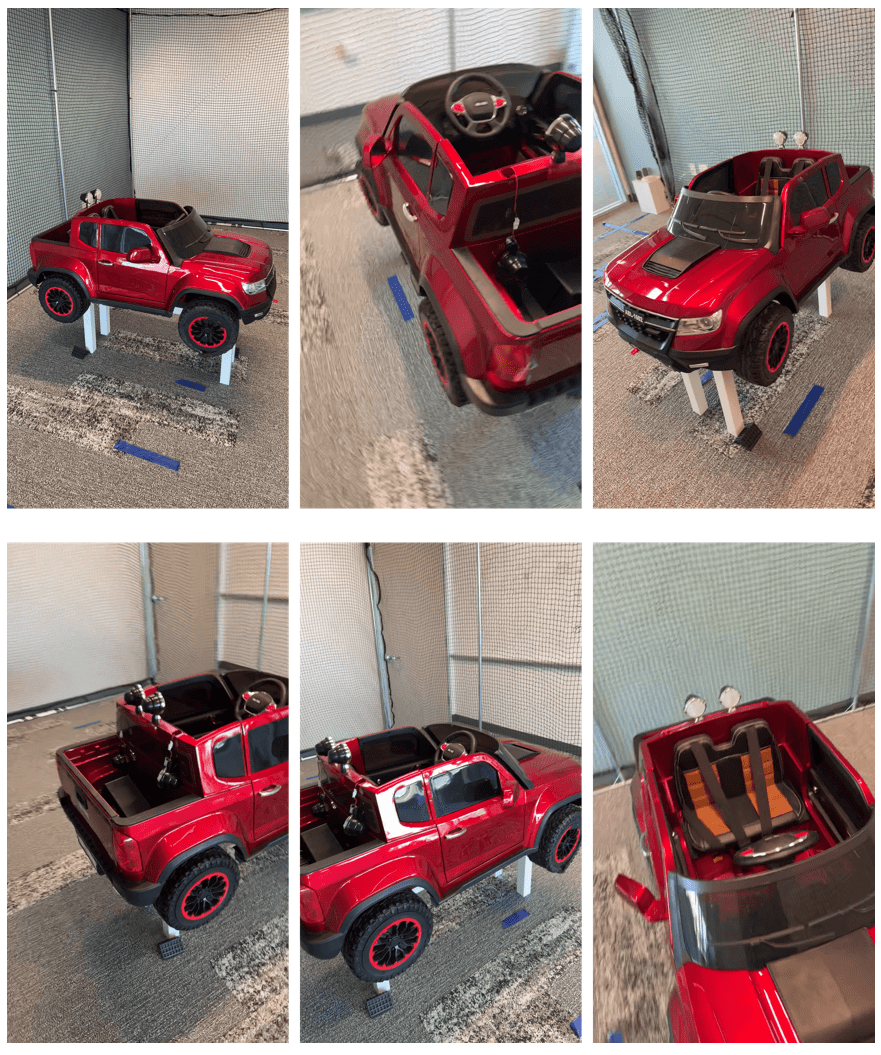}
    \caption{Training data for the \nerf along different camera views}
    \label{fig:cameraviews}
\end{figure}
To estimate the robot geometry, we trained \nerf models via \texttt{nerfstudio} (see \cite{tancik2023nerfstudio} for implementation details) on an Intel Core i9-12900HX, 2300 Mhz, 16 Core processor.
The training process took approximately 18 minutes, averaged across multiple trials, and a visualization of the 3D pointcloud of the resulting \nerf object is shown in Fig.~\ref{fig:pcd}.
\begin{figure}[!ht]
    \centering
    \vspace{-0.2cm}
    \includegraphics[width=0.8\columnwidth]{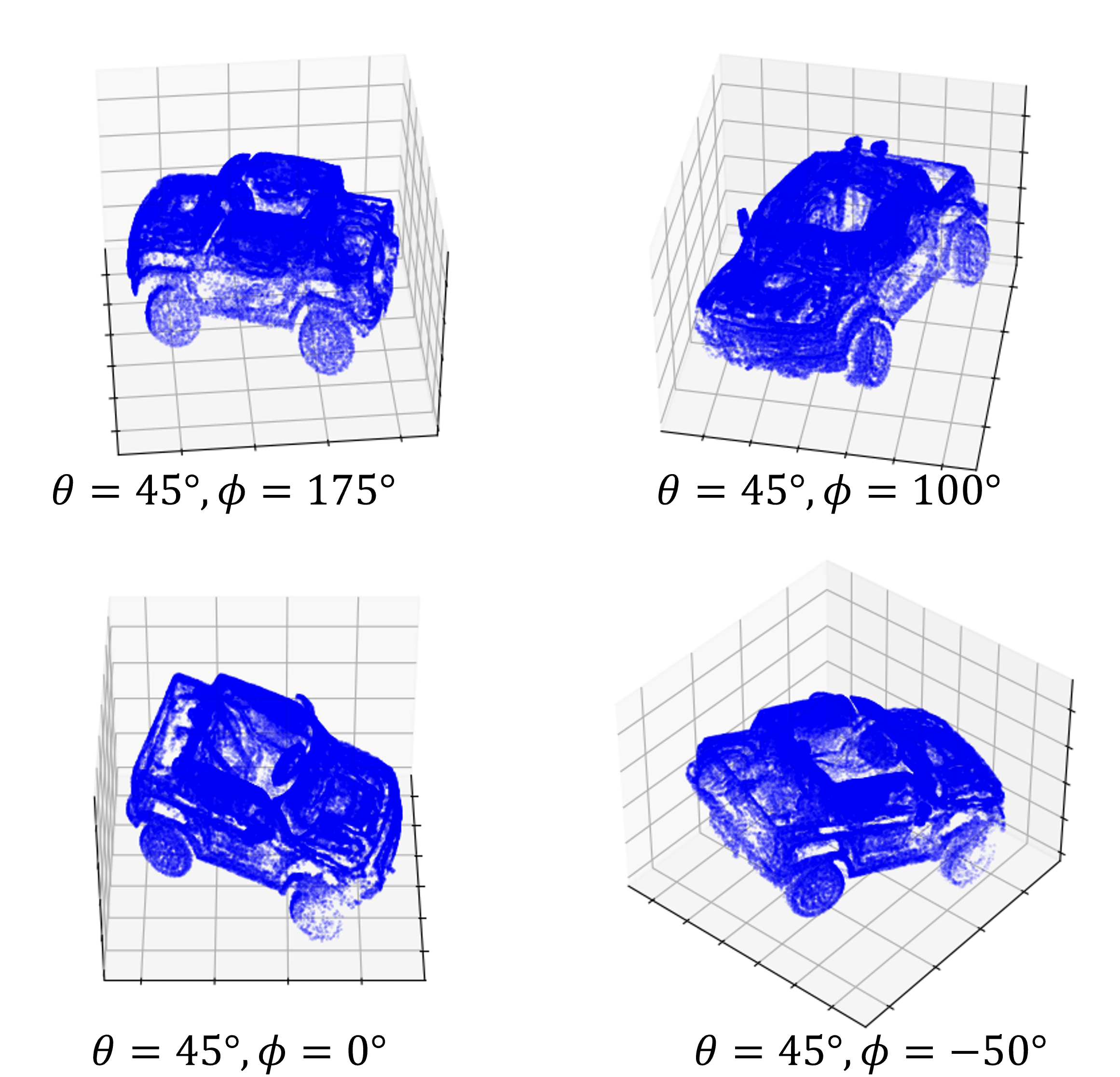}
    \caption{Trained \nerf object: 3D pointcloud}
    \label{fig:pcd}
\end{figure}
The convex hull of the resulting pointcloud is depicted in Fig.~\ref{fig:convexhull}.
Due to the large number of points and articulated surfaces of the object, the convex hull consists of 892 linear inequalities.
The convex hull can be represented as an equivalent system of inequalities as a linear matrix inequality $Fx\leq b$.
Similarly, the polytopic reachable set for $t=0.5$s is expressed as $Gx\leq c$.
\begin{figure}[!ht]
    \centering
    \vspace{-0.4cm}
    \includegraphics[width=0.9\columnwidth]{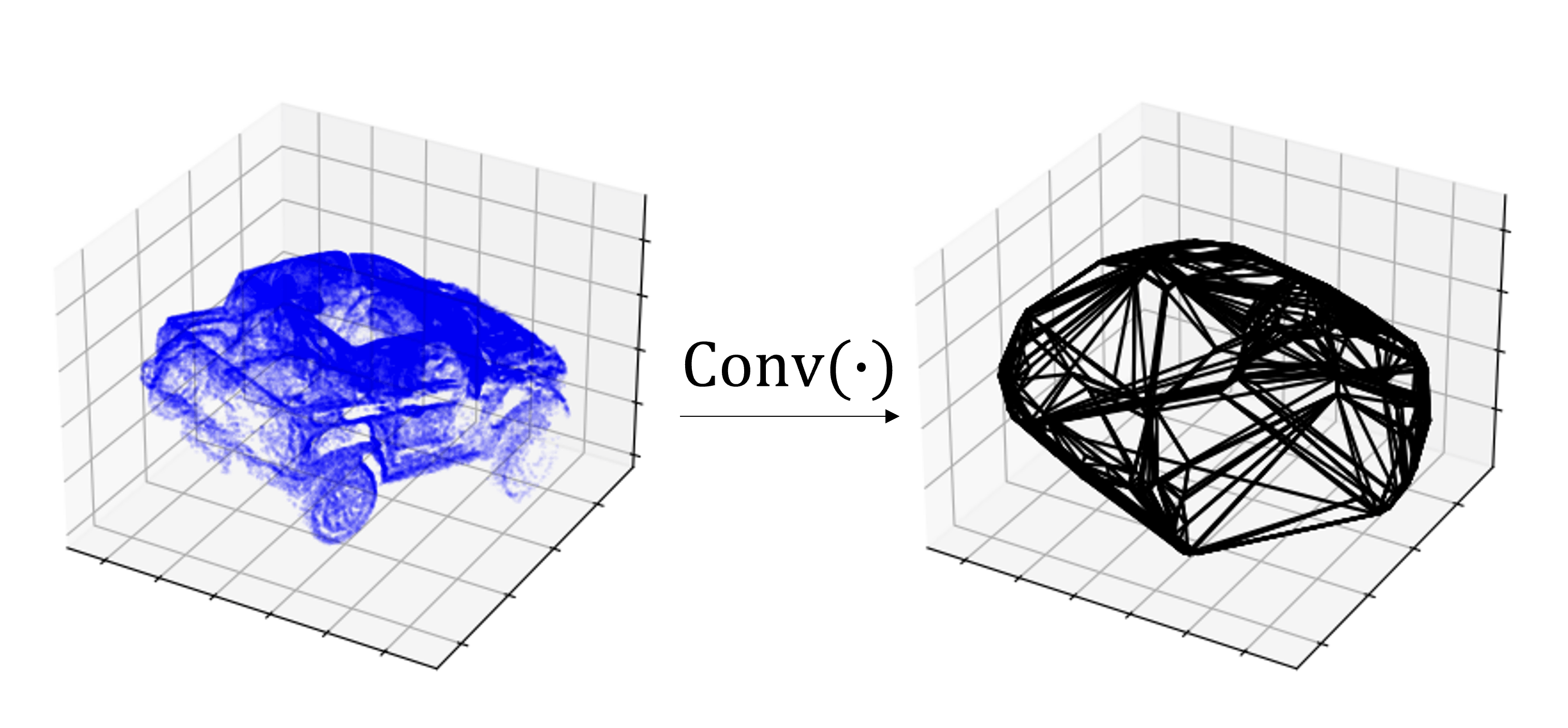}
    \vspace{-0.3cm}
    \caption{Convex hull of the trained \nerf object to describe robot geometry}
    \label{fig:convexhull}
\end{figure}

The arena $\Omega$ consists of a cubical region in 3D with a side length of 8 units.
The robot starts off at $(0,0,0)$, with the goal point at $(7.5,7.5,7.5)$, and 8 cuboidal obstacles of randomly generated sizes populating the arena.
Let $H_ix\leq d_i$ be the polytopic description of the $i$-th obstacle. 
This completely describes the arena $\Gamma$ in the simulation.
The path planning problem in (\ref{eq:pathplanprob}) can then be rewritten as:
\begin{equation}\label{eq:equivproblem1}
\begin{split}
\text{find }\; & u(t) \\
\text{ s.t. }\; & Fx(\tau) \leq b, H_i x(\tau)\geq d_i\; \forall i, \tau\leq T, \\
& \exists x(\tau) \in \mathcal G \text{ such that }G(\tau) x(\tau) \leq c, \tau\leq T\\
& \dot{x}(\tau) = A x(\tau) + B u(\tau)
\end{split}
\end{equation}
This way, the set-based constraints can be written as a proper optimal control problem with linear inequality constraints:
\begin{equation}\label{eq:equivproblem2}
\begin{split}
\text{minimize }\; & d(\mathcal G, \mathcal R(\tau)) \\
\text{ s.t. }\; & Fx(\tau) \leq b, H_i x(\tau)\geq d_i\; \forall i, \tau\leq T, \\
& \dot{x}(\tau) = A x(\tau) + B u(\tau)
\end{split}
\end{equation}
where $d(S_1,S_2)$ is the distance function between sets $S_1$ and $S_2$ defined as $\inf{\norm{x_1-x_2}}$ such that $x_1\in S_1$ and $x_2\in S_2$.

This optimal control problem in (\ref{eq:equivproblem2}) is solved via model predictive control, in a cluttered environment with 6 obstacles.
The resultant path planning trajectory with ego robot geometry represented by the convex hull in Fig.~\ref{fig:convexhull}, and the $T$-time reachable sets are used to perform obstacle avoidance, is shown in Fig.~\ref{fig:result}.

A similar simulation scenario is shown in Fig.~\ref{fig:result2}, with 10 randomly placed obstacles. Instead of the robot geometry, the goal state is given by the \nerf object.
The robot is able to navigate past multiple obstacles in both scenarios, safely clearing the obstacle walls while reaching the goal state.
\begin{figure}[!ht]
    \centering
    \vspace{-0.4cm}
    \includegraphics[width=0.85\columnwidth]{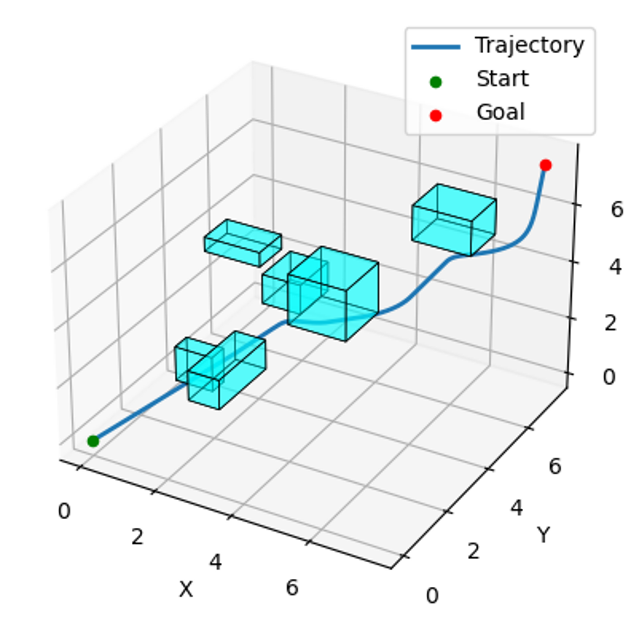}
    \vspace{-0.4cm}
    \caption{\nerf object represents robot geometry, and reachable sets to impose obstacle avoidance}
    \label{fig:result}
\end{figure}

\begin{figure}[!ht]
    \centering
    \vspace{-0.6cm}
    \includegraphics[width=1\columnwidth]{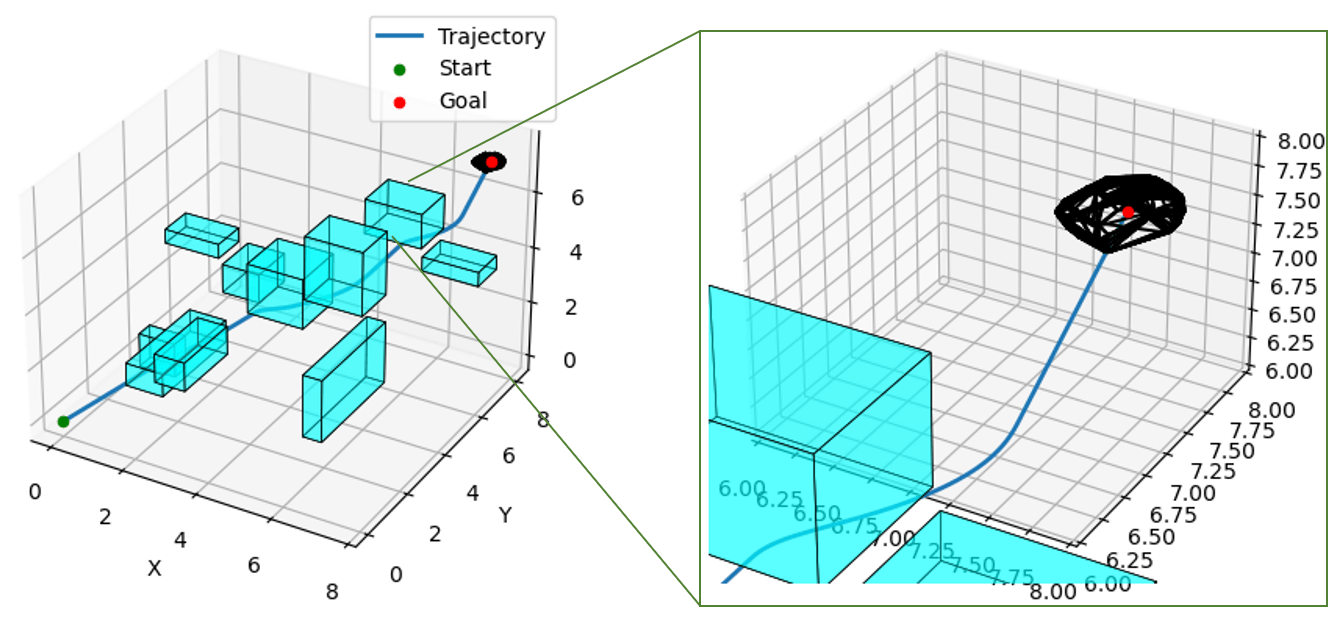}
        \vspace{-0.4cm}
    \caption{\nerf objects represents goal set, and reachable sets to impose obstacle avoidance (in the inset: \nerf object convex hull at goal state)}
    \label{fig:result2}
\end{figure}


\section{Conclusion}\label{sec:conclusion}
In this work we utilized reachable sets to impose constraints on path planning, where the obstacle and/or robot geometries are represented via neural radiance field (NeRF) objects.
The related reachable set computations was carried out in a computationally inexpensive manner via polytopic reachability for linear systems.
The convex hulls of the \nerf objects can be utilized to perform safe path planning, where safety constraints were imposed via reachable sets. 
Since the convex hulls of the \nerf objects can represent goal states, obstacles, robot geometry, and reachable sets are all represented as polytopes, the resulting optimal control problems involve linear matrix inequalities.
The resulting path planning scheme was demonstrated via numerical simulations of path planning in cluttered scenarios.
State-space safety representations stand to benefit significantly from the geometric properties of \nerf objects. 

Our immediate future work will apply this method to safe path planning for robotic manipulators and conduct comparative analyses of storage and computational efficiency relative to constrained optimal control approaches.

\bibliographystyle{IEEEtran}
\bibliography{bibliography}

\end{document}